\newtheorem{theorem}{Theorem}
\newtheorem{lemma}{Lemma}
\newtheorem{corollary}{Corollary}
\newtheorem{definition}{Definition} 
\newtheorem{remark}{Remark}
\newcommand{\HC}{\mathcal{H}}
\newcommand{\G}{\mathcal{G}}
\newcommand{\E}{\mathbb{E}}
\newcommand{\FC}{{F}}
\newcommand{\PC}{{P}}
\newcommand{\NC}{{N}}
\newcommand\blfootnote[1]{%
  \begingroup
  \renewcommand\thefootnote{}\footnote{#1}%
  \addtocounter{footnote}{-1}%
  \endgroup
}
\begin{document}
\title{Query Complexity of Correlated Equilibrium}
\author{Yakov Babichenko\footnote{Center for the Mathematics of Information, Department of Computing and Mathematical Sciences, California Institute of Technology. E-mail: {\tt babich@caltech.edu.} The author  gratefully acknowledges support from a Walter S. Baer and Jeri
Weiss fellowship. } \and Siddharth Barman\footnote{Center for the Mathematics of Information, Department of Computing and Mathematical Sciences, California Institute of Technology. E-mail: \tt{barman@caltech.edu}}}
\date{}
\maketitle

\begin{abstract}
We study lower bounds on the query complexity of determining correlated equilibrium. In particular, we consider a query model in which an $n$-player  game is specified via a black box that returns players' utilities at pure action profiles. In this model we establish that in order to compute a correlated equilibrium any \emph{deterministic} algorithm must query the black box an exponential (in $n$) number of times. \blfootnote{After the completion of this result we became aware of  a recent and independent work by Hart and Nisan~\cite{HN} that generalizes the result presented in this paper. In particular,  Hart and Nisan~\cite{HN} establish query complexity lower bounds for  randomized algorithms and computing approximate equilibria; for a discussion of this work see Section~\ref{sec:dis-hn}.}

\end{abstract}

\section{Introduction}
Equilibria are fundamental constructs in game theory that formally specify potential outcomes in strategic settings. Nash equilibrium~\cite{Nash} and its generalization, correlated equilibrium~\cite{aumann74,aumann87}, are arguably the two most well-established examples of such notions of rationality. Both of these concepts denote distributions over strategies of players at which no player could benefit by unilateral deviation. What distinguishes these constructs is the fact that mixed Nash equilibrium is defined to be a product of independent distributions (one of every player), whereas a correlated equilibrium is a general (joint) probability distribution over the strategy space. 

Questions related to the complexity of determining equilibria have been a driving force behind research at the intersection of computer science and economics. Recent results imply that it is unlikely that there exists an efficient algorithm for  determining mixed Nash equilibrium, even in the two-player case (see~\cite{goldberg,costis,chen}). On the other hand, given a \emph{fixed} number of players $n$ and and at most $m$ actions for every player, a correlated equilibrium can be computed in time polynomial in $m$; since, correlated equilibria can be specified via a linear feasibility program with $O(nm^2)$ linear inequalities over $O(m^n)$ variables. But, this leaves open a natural question of whether there exists an efficient algorithm that determines a correlated equilibrium even if the number of players is large. In this paper we address this question. 

Note that if the game is specified in normal form (i.e., the utilities of all the players at every action profile in the game is given to the algorithm) then the input size itself is exponential in $n$; but, our objective is to determine if a correlated equilibrium can be computed in time polynomial in $n$. A standard approach to deal with such scenarios is to assume that the game is specified via a \emph{black box}. The algorithm may \emph{query} the black box about the game, and receive answers in $O(1)$ time (or in $poly(n,m)$ time). 

The core modeling question is which type of queries can the black box answer? An probable model is one in which every query is a product distribution over strategies, $x=(x_1,x_2,...,x_n)$ (i.e., a mixed action profile), and the black box returns the expected value of the utilities of the players,  $\left(\E_{s \sim x} [ u_i(s)]\right)_i$, as an answer. Such a model is applicable if the game has a \emph{succinct representation} (see \cite{JL}); since, in such a case the expected utility of each player can be computed in $poly(n,m)$ time for every mixed action profile. For such a querying model, building upon the work of Papadimitriou and Roughgarden \cite{PR}, Jiang and Leyton-Brown \cite{JL} proved that there exists a polynomial-time algorithm for computing exact correlated equilibrium. In particular, polynomial (in $n$ and $m$) number of queries are required in this model.

But, for general games it is not always reasonable to assume that such a black box exists. The support of a mixed action profile may be exponential in $n$; therefore, the existence of a black box, which aggregates over those exponential number of outcomes, is a strong assumption. 

A more applicative model is one in which the queries are \emph{pure-action} profiles (i.e., queries are of the form $(s_1, s_2, \ldots, s_n)$, where $s_i$ is a specific strategy of player $i$) and the black box returns the utilities of players at those profiles.  For example, in a repeated-game framework~\cite{hart2000simple} where players do not know the game, they observe the outcome of the realized pure action even if they played a mixed strategy. It might seem that the model is too weak for efficiently computing a correlated equilibrium, because only a very small fraction of the game is known after a polynomial number of queries. But, surprisingly, it turns out that an \emph{approximate} correlated equilibrium\footnote{A probability distribution $\sigma$ over the players' strategies is said to be an approximate ($\epsilon$) correlated equilibrium if for any player unilaterally deviating from strategies drawn from $\sigma$ increases utility, in expectation, by at most $\epsilon$.} can be computed using such a black box in polynomial time. This can be done using procedures--in particular, regret-minimizing dynamic--developed in \cite{HM}, \cite{FV}, and \cite{LW}. These dynamics converge to an {approximate} correlated equilibrium in polynomial number of steps.

Another observation that brings up the applicability of the pure-action-query model is as follows: in every game there exists a correlated equilibria with polynomial-sized support (e.g.,~\cite{germano2007existence}). This is because correlated equilibria are defined by a polynomial  (specifically, $O(nm^2)$) number of linear inequalities, therefore, a basic feasible solution of such a linear feasibility program will have a polynomial number of non-zero entries.  In addition, using only pure-action queries we can efficiently verify whether a probability distribution with polynomial-sized support is a correlated equilibrium or not. 
 
Hence, in this context, it is natural to ask if an \emph{exact} correlated equilibrium can be computed in polynomial time using pure-action queries. This was an open question posed by Sergiu Hart~\cite{open}. 

In this paper we answer this question in the negative: the number of pure-action queries that are required to find an \emph{exact} correlated equilibrium is exponential in $n$, even for games with two actions per player. This result shows that the algorithms in~\cite{PR} and~\cite{JL} that use mixed-action queries (i.e., a succinct representation of the game) are the best possible, in the sense that if we can evaluate utilities only at pure-action profiles and not at mixed-action profiles then a deterministic polynomial-time algorithm that computes a correlated equilibrium does not exist.

We note that similar black-box models have been previously considered by Hirsch et al.~\cite{HPV} and Fernley et al.~\cite{GS} in the context of determining fixed points and Nash equilibrium respectively. 

\section{Notation and Preliminaries}

We consider games with $n$ players, two actions $\{0,1\}$ per
player, and a utility function $u_i: \{0,1\}^n \rightarrow \mathbb{R}$ for every player $i \in [n]$.
We will use $V:=\{0,1\}^n$ to denote the set of pure actions in the
game, and also the set of vertices in the $n$-dimensional
hypercube $\HC_n$. For $s \in V$, we denote by $s\lnot
i:=(s_1,...,s_{i-1},1-s_i,s_{i+1},...,s_n)$ the neighbor of $s$ in
the hypercube obtained by switching the $i$th coordinate.
Write $d_i(s)=u_i(s)-u_i(s\lnot i)$ for the difference \footnote{Note that $d_i(s)$ is negative of the standard regret. We use it instead of regret for ease of exposition.} in the utilities of player $i$ if she switches her action. Finally, let $d(s):=(d_i(s))_{i=1}^n$ 
denote the vector of differences, and $D(s):=\sum_i d_i(s)$ be
the sum of those differences.

In a binary-action game, a correlated equilibrium is defined as follows. 

\begin{definition} \label{defn:ce}
A (joint) probability distribution $\sigma$ over $\{0,1\}^n$ is a correlated equilibrium if for every player $i$ and action $s_i  \in \{0,1\}$ we have
\begin{align*}
\sum_{s_{-i}} \left[ u_i(s_i, s_{-i}) - u_i(1-s_i, s_{-i}) \right] \sigma(s_i, s_{-i} ) \geq 0,
\end{align*}
where $s_{-i}$ denotes the strategies chosen by players other than $i$.   
\end{definition}

We write $s=(s_i,s_{-i})$ and use the definition of $d_i(s)$ to get \[ \E_{s \sim \sigma} [ d_i(s) ] = \sum_{s_i} \sum_{s_{-i}} \left[ u_i(s_i, s_{-i}) - u_i(1-s_i, s_{-i}) \right] \sigma(s_i, s_{-i} ).\] Therefore, if $\sigma$ is a correlated equilibrium then the following holds for every player $i$ 
\begin{equation}\label{eq-ce}
\E_{s \sim \sigma} [d_i(s)]\geq 0.
\end{equation}
Note that inequality (\ref{eq-ce}) is a necessary condition for $\sigma$ to be a correlated equilibrium, it is not sufficient.

\begin{remark}\label{rem}
Given distribution $\sigma$, if $\ \E_{s \sim \sigma}[ D(s)]<0$ then there exists player $i$ such
that $\E_{s \sim \sigma}[d_i(s)]<0$. Hence, any distribution that satisfies 
$\ \E_{s \sim \sigma}[ D(s)]<0$ cannot be a correlated equilibrium.  
\end{remark}

We will also consider \emph{coarse correlated equilibrium} meaning probability distributions $\pi$ over strategy profiles that satisfy 
\begin{align*}
\E_{ s \sim \pi } [u_i(s) ] \geq \E_{s \sim \pi } [u_i(s_i', s_{-i})],
\end{align*}
for every player $i$ and strategy profile $s_i'$ in $i$'s action set (which in general can contain more than two actions). A coarse correlated equilibrium is a generalization of correlated equilibrium in which a player's deviation ($s_i'$) is committed to in advance and independent of the sampled strategy profile ($s$). Coarse correlated equilibria are sometimes called the Hannan set, e.g., see~\cite{Young}.

\begin{remark}\label{rem2}
In binary-action games the set of correlated equilibria
coincides with the set of coarse correlated equilibria.
\end{remark}

To prove our result we will need the following giant-component lemma over hypercubes. For subsets $S, T \subseteq V$ we denote by $\delta(S,T)$ the set of
edges in the hypercube $\HC_n$ that connect $S$ and $T$.

\begin{lemma}
\label{lemma:giant} For any subset of vertices $S \subset V$ of
cardinality less than $\frac{2^n}{n^2+1}$, the number of vertices in
the largest connected component of $V \setminus S$ is greater than
$2^{n-1}$.
\end{lemma}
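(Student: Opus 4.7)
The plan is to argue by contradiction. Suppose every connected component of $V \setminus S$ has at most $2^{n-1}$ vertices. I will locate a union of components $A$ whose size is comfortably bounded from both sides, then estimate its edge boundary in the hypercube in two different ways to force $|S|$ to be large.

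First I would produce a set $A \subseteq V \setminus S$ that is a union of connected components and satisfies $2^{n-2} < |A| \leq 2^{n-1}$. There are two cases. If some component already has more than $2^{n-2}$ vertices, I take $A$ to be that component; its size is at most $2^{n-1}$ by the contradiction hypothesis, so we are done. Otherwise every component has at most $2^{n-2}$ vertices; in this case I list the components in any order and consider their prefix unions. Because $|V \setminus S| = 2^n - |S| > 2^{n-1}$, the total eventually surpasses $2^{n-2}$, and the first prefix that does so overshoots by at most one component, i.e.\ by at most $2^{n-2}$, landing in the interval $(2^{n-2}, 2^{n-1}]$.

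Next I would invoke the edge-isoperimetric inequality of the Boolean hypercube: for every $A \subseteq V$ with $|A| \leq 2^{n-1}$, the edge boundary $\partial_e(A, V \setminus A)$ in $\HC_n$ satisfies $|\partial_e A| \geq |A|$. Combined with $|A| > 2^{n-2}$, this gives $|\partial_e A| > 2^{n-2}$. Because $A$ is a union of components of $V \setminus S$, no edge of $\HC_n$ directly joins $A$ to $V \setminus (S \cup A)$, so every edge in $\partial_e A$ has its other endpoint in $S$. Since each vertex of $S$ has degree $n$ in $\HC_n$, this forces $|\partial_e A| \leq n |S|$, and therefore $|S| > 2^{n-2}/n$.

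Finally I would close the argument by observing that for $n \geq 4$ the inequality $2^{n-2}/n \geq 2^n/(n^2+1)$ holds (equivalent to $n^2 - 4n + 1 \geq 0$), contradicting the hypothesis $|S| < 2^n/(n^2+1)$. For $n \leq 3$ the hypothesis already forces $|S| < 1$, hence $|S| = 0$, and then $V \setminus S = V$ is a single connected component of size $2^n > 2^{n-1}$, so the conclusion is immediate. The one delicate step is the partition: the set $A$ must be well above $2^{n-2}$ without exceeding $2^{n-1}$, since the edge-isoperimetric bound only gives expansion one in that regime; the rest of the proof reduces to this standard edge-expansion fact together with the trivial degree bound on $S$.
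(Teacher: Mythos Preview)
Your proof is correct, but it follows a somewhat different path from the paper's argument.

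The paper also argues by contradiction, assuming every component of $V\setminus S$ has size at most $2^{n-1}$. However, instead of building a single balanced union $A$, it simply sums the edge boundaries of \emph{all} components $C_1,\dots,C_k$. Using Cheeger's inequality together with the second eigenvalue of $\HC_n$, the paper only extracts the weak bound $h(\HC_n)\ge 1/n$, so each $C_j$ contributes at least $|C_j|/n$ boundary edges into $S$; summing gives more than $\frac{1}{n}\cdot 2^n\cdot\frac{n^2}{n^2+1}$ edges incident to $S$, and dividing by the degree $n$ yields $|S|\ge 2^n/(n^2+1)$.

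Your argument instead invokes the sharp edge-isoperimetric fact $|\partial_e A|\ge |A|$ for $|A|\le 2^{n-1}$ (equivalently, $h(\HC_n)=1$), which is a stronger input than the Cheeger bound the paper uses. The price you pay is the extra step of manufacturing a union of components $A$ with $2^{n-2}<|A|\le 2^{n-1}$ so that the isoperimetric inequality applies in a useful regime, plus a separate check for $n\le 3$. The paper's version is structurally simpler (no balanced set, no small-$n$ case split) but leans on spectral machinery; yours is more elementary in its isoperimetric input and in fact yields a tighter conclusion ($|S|>2^{n-2}/n$, which for $n\ge 4$ comfortably exceeds $2^n/(n^2+1)$). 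Had you combined the two ideas---summing over all components but using $|\partial_e C_j|\ge |C_j|$---you would get $n|S|\ge |V\setminus S|$ and hence $|S|\ge 2^n/(n+1)$ directly, avoiding both the balanced-set construction and the spectral detour.
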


\begin{proof}
We denote the edge expansion of the hypercube by $h(\HC_n)$. That
is,
\begin{align*}
h(\HC_n) := \min_{T \subset V} \  \frac{|\delta(T, V \setminus
T)|}{\min \{|T|, |V \setminus T |\} }
\end{align*}

Since the second-largest eigenvalue of the hypercube is $1 -
\frac{2}{n}$ (see, e.g., \cite{Tr}), using Cheeger's inequality (see, e.g., \cite{Ch}) we get that $h(\HC_n)\geq 1/n$. Therefore,  for any $C \subset
\mathcal{V}$ we have
\begin{align}
\label{ineq:exp} |\delta(C, V \setminus C)| \geq \frac{1}{n} \min
\{|C|, |\mathcal{V} \setminus C |\}.
\end{align}

Say we are given an $S \subset V$ that satisfies  $|S| <
\frac{2^n}{n^2+1}$. Denote by $C_1,...,C_k$ the connected component
of $V \setminus S$. Assume for contradiction that all the connected
components satisfy $|C_k|\leq 2^{n-1}$, then $|C_k|=\min \{|C_k|, |V
\setminus C_k |\}$. Since all the edges from $C_k$ to $V\setminus C_k$
must end up in $S$ (and not in other connected components), there
are at least

\begin{equation}
\sum_k \frac{|C_k|}{n} > \frac{1}{n}2^n \left(1-\frac{1}{n^2+1} \right)=
\frac{1}{n}2^n \left(\frac{n^2}{n^2+1}\right)
\end{equation}
incoming edges into $S$, which implies that there are at least
$\frac{1}{n^2}2^n (\frac{n^2}{n^2+1})$ vertices in $S$ (because
every vertex has at most $n$ incoming edges), which contradicts the
assumption on the size of $S$.

\end{proof}

\section{Lower Bound}

Our main result is as follows:

\begin{theorem}\label{theo:main}
\label{theorem:main} Let $A$ be a deterministic algorithm that, for
any $n$-player binary-action game $\G$, determines a correlated
equilibrium after asking $q$ pure-action queries. Then $q \geq
\frac{2^n}{n^2+1}$.
\end{theorem}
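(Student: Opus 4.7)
The approach is an adversary argument. Suppose for contradiction that $q < 2^n/(n^2+1)$. An adversary responds to $A$'s pure-action queries; let $S\subset V$ be the set of queried vertices (so $|S|<2^n/(n^2+1)$) and let $\sigma^*$ denote the distribution that $A$ finally outputs. Because $A$ is deterministic, $\sigma^*$ is determined by the adversary's answers. The plan is to choose these answers on $S$, together with a completion of the utilities on $V\setminus S$, so that $\sigma^*$ violates the CE inequalities. Lemma~\ref{lemma:giant} provides the essential room: $V\setminus S$ has a connected component $C$ with $|C|>2^{n-1}$.

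The technical core is the identity
\[
\E_{s\sim\sigma^*}[d_i(s)] \;=\; \sum_{s\in V} u_i(s)\bigl[\sigma^*(s)-\sigma^*(s\lnot i)\bigr],
\]
obtained by reindexing the $u_i(s\lnot i)$ term in $\E_{\sigma^*}[d_i] = \sum_s \sigma^*(s)\bigl[u_i(s)-u_i(s\lnot i)\bigr]$. For every $s\notin S$ the utility $u_i(s)$ is still under the adversary's control, and it appears with coefficient $\sigma^*(s)-\sigma^*(s\lnot i)$. Thus if any edge $\{s,s\lnot i\}$ with an endpoint in $V\setminus S$ satisfies $\sigma^*(s)\neq\sigma^*(s\lnot i)$, the adversary sets $u_i(s)$ to a large value of the appropriate sign, drives $\E_{\sigma^*}[d_i]$ to $-\infty$, and by Remark~\ref{rem} concludes that $\sigma^*$ is not a correlated equilibrium.

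It remains to handle the rigid case in which $\sigma^*(s)=\sigma^*(s\lnot i)$ on every edge incident to $V\setminus S$. This forces $\sigma^*$ to be constant on each connected component of the subgraph of $\HC_n$ consisting of the edges incident to $V\setminus S$. In particular $\sigma^*$ takes a single value $c$ on $C$ together with every $S$-vertex adjacent to $C$; since $|C|>2^{n-1}$ we get $c\le 2^{1-n}$, so the mass of $\sigma^*$ is essentially supported on the \emph{inner} vertices of $S$, namely the set $I$ of those $s\in S$ all of whose hypercube neighbors also lie in $S$. Combining this rigidity with $\sum_{s\in V}D(s)=0$ yields
\[
\E_{\sigma^*}[D(s)]\;=\;\sum_{s\in I}\bigl(\sigma^*(s)-c\bigr)\,D(s),
\]
and the adversary's pre-committed answers on $S$ are chosen so that $D$ has an appropriate sign on $I$; a further application of Remark~\ref{rem} then refutes $\sigma^*$ in this case as well.

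The hard part will be this last step: the inner set $I$ is determined online by $A$'s adaptive query sequence, so the adversary cannot simply inspect $I$ before choosing its answers on $S$. I expect to handle this by designing an online, potential-style answering rule for the utilities revealed at query time whose invariants guarantee the needed sign condition on $D$ at any vertex that turns out to be inner, regardless of the order in which $s$ and its $n$ hypercube-neighbors happen to be queried. Finding such a rule---or an alternative finishing argument that bypasses the inner set altogether---is where the real work of the proof seems to lie.
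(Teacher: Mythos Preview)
Your non-rigid case is correct and clean. The gap is in the rigid case, and it is more than the missing online rule you flag at the end. Your finishing step there appeals to Remark~\ref{rem}, i.e.\ to $\E_{\sigma^*}[D]<0$. But take $\sigma^*$ uniform on $V$: it is rigid (every edge has equal endpoint values), yet $\E_{\sigma^*}[D]=2^{-n}\sum_{s\in V}D(s)=0$ for \emph{every} game, so Remark~\ref{rem} can never refute it, even though a generic game has no reason to have the uniform distribution as a correlated equilibrium. More generally, whenever $\sigma^*\equiv c$ on $I$ as well (in particular when $I=\emptyset$) your displayed identity gives $\E_{\sigma^*}[D]=0$, and no choice of $D$ on $I$ helps. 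A second inaccuracy: rigidity only forces $\sigma^*$ to be constant on each connected component of the graph whose edges are those meeting $V\setminus S$; smaller components of $V\setminus S$ may carry their own constants, so the single-$c$ formula $\E_{\sigma^*}[D]=\sum_{s\in I}(\sigma^*(s)-c)D(s)$ need not hold as written.

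The paper avoids both issues by a different decomposition. Its adversary works with the difference vectors $d(\cdot)$ directly, maintains the current giant component $W_t$ of $V\setminus Q_t$, and whenever vertices fall out of $W_t$ assigns their $d$-values by a spanning-tree, leaves-to-root rule that forces $D(s)=-1$ while respecting $d_i(s\lnot i)=-d_i(s)$; this is exactly the kind of online answering rule you were seeking. After the queries, the case split is on the $\sigma^*$-mass of the still fully unassigned set $N\subseteq W_q$ (which has $|N|\ge 2^n/3$). If that mass is small, one extends the tree so that $D\equiv-1$ on all of $V$ except a single low-probability root $v\in N$, and Remark~\ref{rem} applies. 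If that mass is large, the paper does \emph{not} use Remark~\ref{rem}: it fixes one player $p$, sets $d_p$ to a large negative constant on the half of $N$ with $s_p=0$, and directly violates the inequality in Definition~\ref{defn:ce} for player $p$ at action $0$. The uniform distribution lands in this second case. So the missing idea in your plan is that the aggregate $\E_{\sigma^*}[D]$ is too coarse in the high-free-mass regime; there one must return to a single player's correlated-equilibrium inequality rather than to Remark~\ref{rem}.
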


This is in contrast to mixed-action queries where $q=poly(n,m)$ 
suffices~\cite{PR,JL} . This is also in contrast to probabilistic regret-minimizing
dynamics (e.g., regret matching \cite{HM}) that determine correlated $\varepsilon$-equilibrium using $ poly(n,m,\frac{1}{\varepsilon})$ pure
action queries. 

We get the following corollary from Remark~\ref{rem2}. 
\begin{corollary}
Let $A$ be a deterministic algorithm that, for any $n$-player
binary-action game $\G$, determines a coarse correlated equilibrium
after asking $q$ pure-action queries. Then $q \geq
\frac{2^n}{n^2+1}$.
\end{corollary}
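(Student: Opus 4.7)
The plan is to deduce the corollary as an immediate consequence of Theorem~\ref{theo:main} together with Remark~\ref{rem2}. Suppose $A$ is a deterministic algorithm that, on every $n$-player binary-action game $\G$, returns a coarse correlated equilibrium using at most $q$ pure-action queries. By Remark~\ref{rem2}, in binary-action games every CCE is a CE, so the same $A$ also returns a correlated equilibrium on every such game using at most $q$ queries. Theorem~\ref{theo:main} then forces $q \geq \frac{2^n}{n^2+1}$, which is the claimed bound. In short, the reduction is trivial: any black-box algorithm that solves the (putatively easier) CCE problem inherits the hardness of the CE problem because the two notions coincide under the standing binary-action hypothesis.

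The only auxiliary task is to confirm the nontrivial inclusion CCE $\subseteq$ CE from Remark~\ref{rem2}, since the remark is stated without proof in the excerpt. Fix a player $i$ and a deviation $s_i' \in \{0,1\}$. The CCE inequality $\E_{s \sim \pi}[u_i(s) - u_i(s_i', s_{-i})] \geq 0$ has integrand zero on the slice $\{s : s_i = s_i'\}$, so it reduces to $\sum_{s_{-i}} [u_i(1-s_i', s_{-i}) - u_i(s_i', s_{-i})]\, \pi(1-s_i', s_{-i}) \geq 0$. Setting $s_i := 1-s_i'$ this is exactly the CE inequality from Definition~\ref{defn:ce} for player $i$ at action $s_i$. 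Letting $s_i'$ range over $\{0,1\}$ recovers both CE inequalities for player $i$, so $\pi$ is a CE. The reverse inclusion CE $\subseteq$ CCE is standard and holds in all games, so CE $=$ CCE whenever each player has two actions.

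I do not foresee a genuine obstacle, since all the combinatorial difficulty is already packaged into Theorem~\ref{theo:main}. The binary-action hypothesis plays exactly the role of collapsing CE and CCE so that the existing lower bound transfers to coarse correlated equilibria without loss; no new adversary construction or query-complexity argument is required.
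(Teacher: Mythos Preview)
Your proposal is correct and matches the paper's approach exactly: the corollary is stated as an immediate consequence of Theorem~\ref{theo:main} via Remark~\ref{rem2}, with no separate argument given. Your added verification that CCE $\subseteq$ CE in binary-action games (by observing the CCE inequality is vacuous on the slice $s_i=s_i'$ and reduces to the CE inequality on the other slice) is a welcome supplement, since the paper leaves Remark~\ref{rem2} unproved.
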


\begin{proof}[Proof of Thoerem \ref{theo:main}]

The high-level argument behind the result is that if we are given
subexponentially many strategy profiles (i.e., vertices of the
hypercube), one after the other, then we can \emph{adaptively}
set utilities at these profiles such that no correlated
equilibrium can be constructed using them. To develop some intuition
as to how we accomplish this construction, say $Q$ is the set of queried (pure) action profiles and we set the utilities
such that the sum of difference values, $D(s)$, is equal to $-1$ for all $s \in Q$. Then by
Remark~\ref{rem}, there does not exist a correlated equilibrium
$\sigma$ whose support is contained in $Q$. The querying algorithm
might use strategy profiles that are not in $Q$, but the fact that
the utilities at these strategies is unknown to the algorithm lets us
ensure that any proposed distribution is not a correlated
equilibrium.

As stated above, a key element for us is an adaptive process that
given a sequence of strategy profiles sets utilities such that
 $D(s)$ is negative at every queried strategy. We model
this as a multi-round interaction between a querier (surrogate for
the deterministic algorithm) and an adversary (representing the black
box). In each round, the querier sends in a strategy profile $s$ to the
adversary which in turn returns a vector of differences $d(s)$. Note
that in a binary-action game the vector $d(s)$ must satisfy the
constraint
\begin{equation}\label{eq-cons}
d_i(s\lnot i)= -d_i(s)
\end{equation}
for every player $i$. And those are the \emph{only} constraints on
the difference vectors $d(s)$ in the game. Hence the adversary
needs to ensure that the returned vectors satisfy the
complementarity constraint (\ref{eq-cons}).

It is worth mentioning that unlike the black box the adversary is
designed to return difference values, $d(s)$. But, this does not reduce the
applicability of the interaction model. Given that the
adversary returns difference values that satisfy property $(\ref{eq-cons})$ for
queried strategy profiles $s$, we can set utilities, at $s$ and
its neighbors (specifically $u_i$ at $s \lnot i$ for all $i$), that match the difference values
reported by the adversary. These utilities can then be considered as
the response of the black box at $s$.

\begin{figure}[h]\label{fig:proof}
\begin{center}
\includegraphics[scale=0.6]{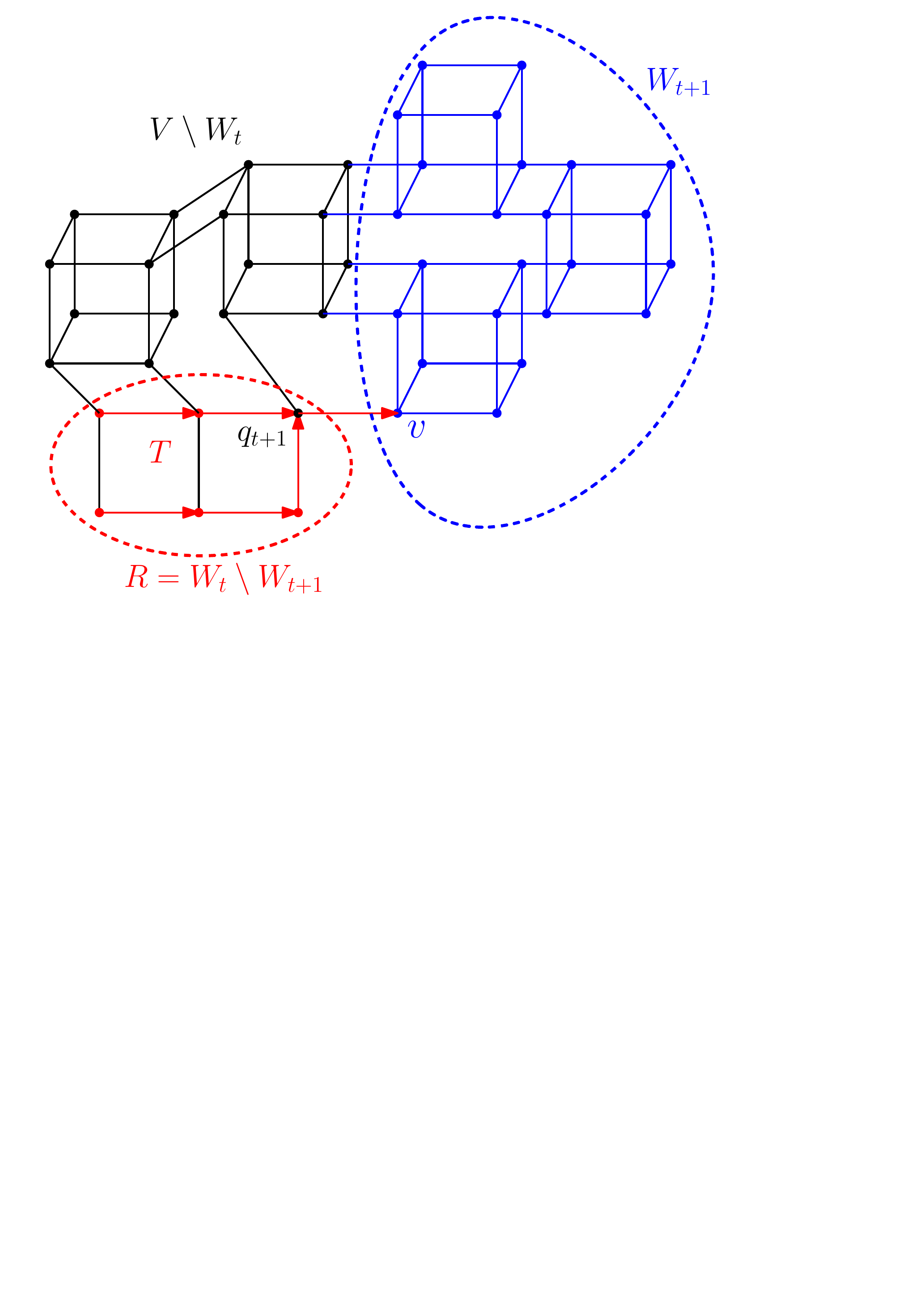}
\caption{The figure demonstrates the recursive definition of
$W_{t+1}$ from $W_t$ when the queried vertex is $q_{t+1}$. The
figure also depicts the construction of the tree $T$.
}
\end{center}
\end{figure}

Let $Q_t\subset V$ be the set of queries that the
querier asks in first $t$ rounds. We define $W_t\subset V$ to be the largest connected
component of $V\setminus Q_t$. Note that, either $W_{t+1}\subseteq W_t$ or
$W_{t+1}\cap W_t=\emptyset$. But, for any $t < \frac{2^n}{n^2+1}$, the second relation cannot hold, because $|W_{t+1}|,|W_t|> 2^{n-1}$ by Lemma \ref{lemma:giant}.
Therefore, for all $t < \frac{2^n}{n^2+1}$, we have $W_{t+1}\subseteq W_t$.

As the interaction continues, the adversary will progressively assign difference values $d(s)$ that satisfy constraint (\ref{eq-cons}). For all $t$, the adversary maintains the following three properties for $W_t$: 

\begin{itemize}
\item[(P1)] For every $s\notin W_t$, all the components of $d(s)$ have already been assigned a value, i.e., vector $d(s)$ is completely defined, and $D(s)=-1$. 

\item[(P2)] For every edge $(s,s\lnot i)$ such that $s,s\lnot i \in W_t$ the difference values $d_i(s)$ and $d_i(s\lnot i)$ are unassigned, in other words, these values have not been set so far.

\item[(P3)] Every vertex $s\notin W_t$ that has an edge into $W_t$ satisfies $s\in Q_t$ (this property will implicitly hold since $W_t$ is an entire connected component in $V \setminus Q_t$).
\end{itemize}

At time $t+1$ the adversary assigns vectors $d(s)$ for every vertex
$s\in W_{t}\setminus W_{t+1}$ as follows.

Denote $R= W_{t}\setminus W_{t+1}$. $R$ is the set of vertices that
got disconnected from $W_t$ because some vertex, say $q_{t+1}$, got queried.  That is, removing $q_{t+1}$ disconnected $W_t$ (see Figure \ref{fig:proof}). In general, $R$ might contain other vertices besides  
$q_{t+1}$. Nonetheless, $R$ has to be a connected
set of vertices. Let $v \in W_{t+1}$ be a neighbor of $q_{t+1}$---if $W_t$ gets disconnected by removing $q_{t+1}$ then such a vertex must exist. By construction, the vertex set  $R \cup \{v\}$ forms a connected component (see Figure
\ref{fig:proof}). Write $T$ to denote some spanning tree of $R\cup \{v\}$. We root the tree $T$ at $v$ and define
vectors $d(s)$ for every $s\in R$ in a bottom-up manner: from the leafs
to the root.

Given a leaf $s$ with the edge $(s,s\lnot i)$ in $T$ (i.e., $s\lnot i$ is the parent of $s$ in $T$), we know by property (P2) that so far we have not assigned a value to $d_i(s)$. Moreover, this holds for all edges out of $s$ that end in $R$. Write index set $J=\{  j \mid s\lnot j \in R\}$. By (P2) we are free to to assign values to $d_j(s)$ and $d_j(s \lnot j)$. For $j \in J \setminus \{ i \}$, we set $d_j(s)= d_j(s\lnot j) = 0$, and finally assign  
\begin{equation}\label{eq-1}
d_i(s)=-\sum_{k\neq i} d_k(s)-1, 
\end{equation}
and $d_i(s\lnot i) = -d_i(s)$. Such an assignment ensures that constraints (\ref{eq-cons}) are satisfied for all defined values. Once values for $s$ have been assigned, we remove it from consideration and recurse over $T$. 

Using this procedure, (\ref{eq-1}) guarantees that property (P1) is
maintained; in particular, $D(s) = -1$ for all $s \in R$. In addition, since we did not assign difference
$d_i(s)$ along any edge $(s, s \lnot i )$ such that $s, s\lnot i \in W_{t+1}$, property (P2) is
maintained as well. 

As stated above, property (P3) holds because no
vertex in $R\setminus \{q_{t+1}\}$ is directly connected to
$W_{t+1}$. 

Overall, say, the querier submits $q$ queries to the adversary with $q<\frac{2^n}{(n^2+1)}$. After the $q$ queries have been processed by the adversary we split $V$ into three sets:

(1) $\FC:=V\setminus W_q$. This is the set of (fully assigned) vertices, $s$,  for which all the components of the difference vector, $d(s)$, have been assigned a value. Also, by construction, for such $s$ we have the sum of differences $D(s)=-1$.

(2) $\PC:=\{s \in W_q \mid \text{ there exists a neighbor of } s \text{
in } \FC \}$. These are vertices, $s$, with a partial assignment. That is, some of the components of the difference vector $d(s)$ have been assigned a value. Such a partial assignment follows from the fact that we must 
have fixed the difference values for several neighbors of $s$
(but not for all of them, otherwise $s$ will belong to
$\FC$).

(3) $\NC:=V\setminus (\FC\cup \PC)$. These are the vertices with no assignments. In other words, for all $s\in \NC$, the entire vector $d(s)$ is unassigned.

By Lemma \ref{lemma:giant} we know that $|W_q|> 2^{n-1}$, since
$W_q$ is the largest connected component of the graph
$V\setminus Q_q$. Therefore $|\FC|<2^{n-1}$. By property (P3) we know
that all the vertices in $\PC$ are neighbors of $Q_q$, therefore
$|\PC|\leq n |Q_q|\leq \frac{n}{n^2+1}2^n$, therefore

\begin{equation}\label{ineq-NC}
|\NC|\geq \left(1-\frac{1}{2}-\frac{n}{n^2+1}\right)2^n \geq \frac{1}{3}2^n.
\end{equation}
The last inequality holds for all $n\geq 6$.

Denote by $x$ the probability distribution over $V$ that the
querier submits after the $q$ queries. For subset $S \subset V$, write $x(S)$ to denote the cumulative probability of vertices in $S$, $x(S):= \sum_{s\in S} x(s)$. 

The adversary will complete the remaining utilities in the game using the following rules. We will show that these rules guarantee that $x$ is not a correlated
equilibrium. In other words, the adversary can always ensure that output generated by the querier is not a correlated equilibrium. 

Case 1: $x(\NC)\leq 1/6$. In such a case, by inequality
(\ref{ineq-NC}), there exists an action $v\in \NC$ such that $x(v)\leq
\frac{1}{2}2^{-n}$. The adversary constructs a spanning tree that
contains all the vertices $W_q$ with the root $v$. Now, using the
same bottom-up assignment procedure that we described before, the adversary sets the
difference values for $s \in W_q$ such that $D(s)=-1$ for all $s\neq v$. Note that this implies that $D(v)=2^n-1$, since $\sum_{s\in V} D(s)=0$. Now we have
\begin{equation}
\E_{s\sim x}[ D(s)]\leq \frac{1}{2}2^{-n}D(v)+\left(1-\frac{1}{2}2^{-n}\right)(-1)< -\frac{1}{2}+\frac{1}{2}2^{-n} < 0.
\end{equation}
By Remark~\ref{rem} we get that $x$ is not a correlated equilibrium.

Case 2: $x(\NC)> 1/6$. Denote by $M$ the maximal (difference) value that was
assigned by the adversary during the $q$ queries. Recall that
for strategy profiles $s \in \NC$ difference values $d_i(s)$, for all $i$, are unassigned. We leverage this freedom and set these difference values such that some player, say $p$, has an incentive to deviate to one of the strategies $0$ or $1$. This will imply that $x$ is not a correlated equilibrium. 

For $j=0,1$ we denote by $\NC_j$ the set actions in $\NC$ where
player $p$ plays action $j$, $N_j:=\{ s \in N \mid s_p = j\}$. One of the sets $\NC_0,\NC_1$ satisfies
$x(\NC_j)\geq \frac{1}{12}$. Without loss of generality, say $x(\NC_0)\geq \frac{1}{12}$. The
adversary sets $d_p(s)=-12M$ for every $s \in \NC_0$ and then, following equation (\ref{eq-cons}), sets $d_p$ for $s \lnot p \in \NC_1$. All the remaining unassigned difference values are set to zero. 

Now, for the action $s_p=0$
we have
\begin{equation}
\sum_{s_{-p}}  \left[ d_p(s_p,s_{-p}) \right] x(s_p, s_{-p}) \leq \frac{11}{12}M+\frac{1}{12}(-12M)<0
\end{equation}
This implies that Definition~\ref{defn:ce} does not hold for the proposed distribution $x$, and hence it is not a correlated equilibrium.

\end{proof}

\section{Discussion}\label{sec:dis}

\subsection{The Tightness of the Result by Jiang and Leyton-Brown \cite{JL}} 

Using the ellipsoid-against-hope algorithm (see also
\cite{PR}) Jiang and Leyton-Brown reduced the problem of computing correlated equilibrium in $n$-player games for each player into the following problem. 

\emph{Reduced problem\footnote{For ease of exposition, we state the problem only for binary-action games. The formulation in~\cite{JL} is for $m$-action games, with $m\geq 2$, and can be obtained by appropriately increasing the dimension of the involved vectors.} :} We are given a game with $n$ players, $2$ actions per player, and a vector $y \in \mathbb{R}_{+}^{n}$. The goal is to determine a pure action profile $s$ such that $y^T d(s)> 0 $, where $d(s)$ is the vector of differences at $s$.  


In the proof of Theorem \ref{theo:main} we show that in the pure-action-query model, if an algorithm, after submitting a sub-exponential number of queries, is unable to determine a pure-action profiles, $s$, for which $\sum_i d_i(s) > 0$ then the algorithm can not produce a correlated equilibrium. The sum, $\sum_i d_i(s)$, corresponds to the reduced problem with $y=(1,1,...,1)$. In this sense, solving the reduced problem is necessary for determining a correlated equilibrium. Note that, by scaling the utilities, we can extend the same arguments to prove the necessity of the reduced problem for \emph{any} vector $y \in \mathbb{R}_{+}^{n}$. 


Overall, in the pure actions query model, the problem of computing correlated equilibrium is \emph{equivalent} to the reduced
problem. The result of Jiang and Leyton-Brown~\cite{JL} establishes that in order to compute a correlated equilibrium in polynomial time it is sufficient to solve the reduced problem in polynomial time. On the other hand, if there exists a vector $y$ for which the reduced problem cannot be solved efficiently, then the arguments presented in this paper prove that a correlated equilibrium cannot be efficiently determined. 

\subsection{The Result of Hart and Nisan~\cite{HN}} \label{sec:dis-hn}

We show that in the pure-action-query model, there is no polynomial time \emph{deterministic} algorithm for computing
\emph{exact} correlated equilibrium. Whereas, regret-minimizing dynamics give us a 
polynomial-time \emph{randomized} algorithm for computing \emph{approximate} correlated equilibrium. In order to complete the picture one should answer the following two questions in this pure-action-query model:
(i) Does there exist a polynomial time \emph{deterministic} algorithm for computing \emph{approximate} correlated equilibrium?; (ii) Can a polynomial-time \emph{randomized} algorithm compute an \emph{exact} correlated equilibrium?

Hart and Nisan \cite{HN} show that the answer to both of these questions is negative. 

\bibliographystyle{plain}
\bibliography{qc-corr-eq}

\end{document}